\newtheorem{proposition}{Proposition}
\newtheorem{corollary}{Corollary}[proposition]
\newcolumntype{+}{!{\vrule width 2pt}}
\newlength\savedwidth
\renewcommand{\@biblabel}[1]{\quad#1.}
\date{}
\begin{document}
\vspace*{0.2in}

\begin{flushleft}
{\Large
\textbf\newline{Graphettes: Constant-time determination of graphlet and orbit identity including (possibly disconnected) graphlets up to size 8} 
}
\newline
\\
Adib Hasan\textsuperscript{1},
Po-Chien Chung\textsuperscript{2},
Wayne Hayes\textsuperscript{2*},
\\
\bigskip
\textbf{1} Ananda Mohan College, Mymensingh, Bangladesh
\\
\textbf{2} Dept. of Computer Science, University of California, Irvine, California, USA
\\
\bigskip

%
%





* Corresponding author: whayes@uci.edu

\end{flushleft}
\section*{Abstract}
{\it Graphlets} are small connected induced subgraphs of a larger graph $G$. Graphlets are now commonly used to quantify local and global topology of networks in the field. Methods exist to exhaustively enumerate all graphlets (and their orbits) in large networks as efficiently as possible using {\it orbit counting equations}. However, the number of graphlets in $G$ is exponential in both the number of nodes and edges in $G$. Enumerating them all is already unacceptably expensive on existing large networks, and the problem will only get worse as networks continue to grow in size and density. Here we introduce an efficient method designed to aid statistical {\em sampling} of graphlets up to size $k=8$ from a large network.  We define {\em graphettes} as the generalization of graphlets allowing for disconnected graphlets. Given a particular (undirected) graphette $g$, we introduce the idea of the {\em canonical} graphette $\mathcal K(g)$ as a representative member of the isomorphism group $Iso(g)$ of $g$. We compute the mapping $\mathcal K$, in the form of a lookup table, from all $2^{k(k-1)/2}$ undirected graphettes $g$ of size $k\le 8$ to their canonical representatives $\mathcal K(g)$, as well as the permutation that transforms $g$ to $\mathcal K(g)$. We also compute all automorphism orbits for each canonical graphette. Thus, given any $k\le 8$ nodes in a graph $G$, we can in constant time infer which graphette it is, as well as which orbit each of the $k$ nodes belongs to. Sampling a large number $N$ of such $k$-sets of nodes provides an approximation of both the distribution of graphlets and orbits across $G$, and the orbit degree vector at each node.

\section*{Author summary}
{\it Graphlets} are small subgraphs of a larger network. They have been used extensively for over a decade in the analysis of social, biological, and other networks. Unfortunately it is extremely expensive to exhaustively enumerate all graphlets appearing in a large graph, requiring days or weeks of computer time for recent large networks. Here we introduce a novel method for statistically sampling graphlets from large graphs. The time required does not depend upon the size of the input network, but instead upon the number of samples desired.  In addition, existing methods only look at graphlets up to size 5 or 6; we allow graphlets up to size 8, which significantly improves on the sensitivity and specificity of network analysis. Our method will allow graphlets to be efficiently utilized to analyze networks of arbitrary size going into the future.

\section*{Introduction}
Network comparison is a growing area of research. In general the problem of complete comparison of large networks is intractable, being an $NP$-complete problem \cite{Cook:1971:CTP:800157.805047}. Thus, approximate heuristics are needed. Networks have been compared for statistical similarity from a high-level using simple, easy-to-calculate measures such as the degree distribution, clustering co-efficients, network centrality, among many others  \cite{NewmanNetworks2010,emmert2016fifty}. While more sophisticated methods such as spectral analysis \cite{WilsonZhu2008Spectral,ThorneStrumpf2012} and topological indices \cite{dehmer2014interrelations} have been useful, the study of small subnetworks such as {\it motifs} \cite{Milo2002Motifs} and {\it graphlets} \cite{Przulj2004Graphlets,PrzuljGDD2007} have become popular. They have been used extensively to globally classify highly disparate types of networks \cite{Przulj2014HiddenLanguage} as well as to aid in local measures used to {\it align} networks \cite{GRAAL,LGRAAL,MAGNA,MamanoHayesSANA}. 
\begin{figure*}
\centering
\includegraphics[width=0.75\linewidth,angle=-90]{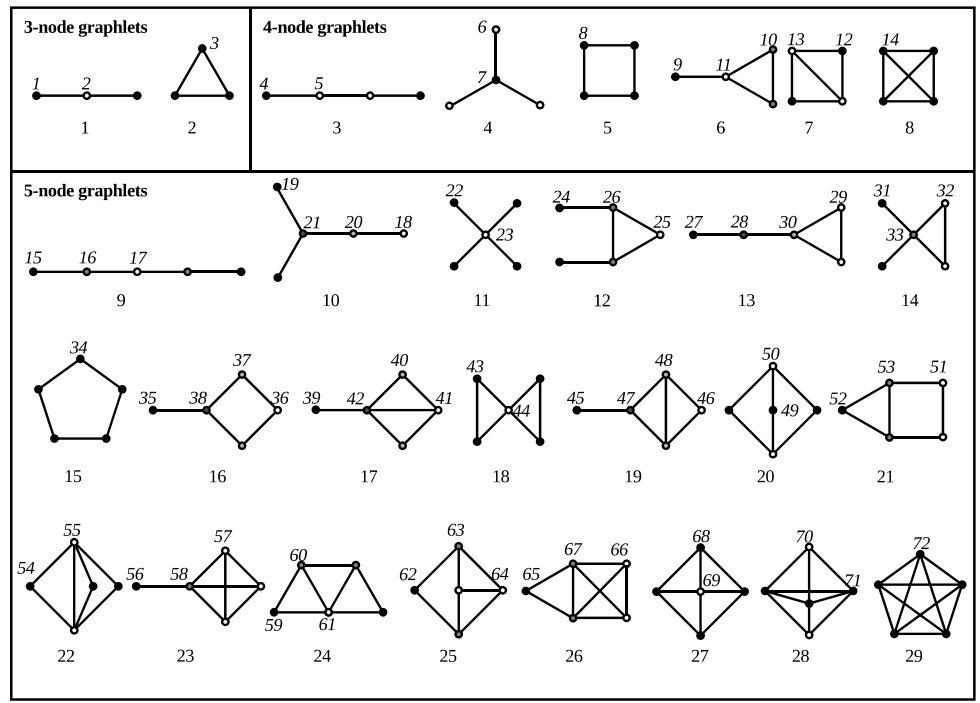}
\caption{All (connected) graphlets of sizes $k=3,4,5$ nodes, and their automorphism orbits; within each graphlet, nodes of equal shading are in the same orbit. The numbering of these graphlets and orbits were created by hand \cite{Przulj2004Graphlets} and do not correspond to the automatically generated numbering used in this paper. The figure is taken verbatim from  \cite{Melckenbeeck1EtAl2016}.}
\label{fig:graphlets5}
\end{figure*}

A {\em graphlet} is a small, connected, induced subgraph $g$ of a larger graph $G$. Given a particular graphlet $g$, the {\it automorphism orbits} of $g$ are the sets of nodes that are topologically identical to each other inside $g$.
Graphlets and their automorphism orbits with up to $k=5$ nodes were first introduced in 2004 \cite{Przulj2004Graphlets}, and are depicted in Fig \ref{fig:graphlets5}. Recently, automated methods have been created that can enumerate, in a larger graph, all graphlets and their automorphism orbits up to graphlet size $k=5$  \cite{ORCA} and subsequently to any $k$   \cite{Melckenbeeck1EtAl2016}, although the latter authors only applied it up to $k=6$. Unfortunately, we have found that these methods take a very long time (hours to days) even just to count graphlets up to size $k=5$ on some large biological networks, such as those in BioGRID  \cite{Chatr-aryamontri01012013}.  It is not clear that such methods, especially for even larger $k$, will be applicable to the coming age of ever bigger networks, since the total number of graphlets appearing in a large network tends to increase exponentially with both $k$ (the graphlet size) and $n$ (the number of nodes in the large network).  Eventually, an exhaustive enumeration of all graphlets appearing in a large network may become infeasible simply due to the number of graphlets that need to be enumerated, even under the optimization of using orbit counting equations. On the other hand, graphlets are too useful to abandon as a method of quantifying the topological structure of graphs. An achievable alternative for a large network $G$ is to statistically sample its graphlets rather than exhaustively enumerate them.  Additionally, such sampling could be useful with the recent advent of comprehensive biological network {\it databases}   \cite{NDEx2017}: each sampled graphlet would act as a seed for local matching between larger networks, similar to how {\it k-mers} (short sequences of length $k$) are used for seed-and-extend sequence matching in BLAST  \cite{blast}.

To efficiently create a statistical sample of graphlets in a large network $G$, one must be able to take an arbitrary set of $k$ nodes from $G$, and efficiently (preferably in constant time) determine both {\em which} graphlet is represented, as well as the automorphism orbits of each of the $k$ nodes. Here, we solve this problem both by enumerating all graphlets (and their disconnected counterparts, which we term {\it graphettes}) and their automorphism orbits up to graphettes of size $k=8$. We present a method that creates a lookup table that can quickly determine the graphette identity of any $k$ nodes, as well as their automorphism orbits. Since the lookup table required significant time to pre-compute for $k=7$ (a few hours on a single core) and $k=8$ (hundreds of CPU weeks on a cluster), we provide the actual lookup tables for these values of $k$ online at \url{http://github.com/Neehan/Faye}.

\section*{Materials and methods}
\subsection*{Definitions and notations}

Given a graph $G$ on $n$ nodes, a $k$\textit{-graphette} is a (not necessarily connected) induced subgraph $g$ on any set of $k$ nodes of $G$.  There are many ways one could choose the $k$ nodes, for example (i) choosing $k$ nodes uniformly at random from $G$, or (ii) performing a local search around some node $u$. We expect the former to be useful only in dense networks, while the latter is probably more useful in sparse networks because most random sets of $k$ nodes in a sparse graph will be highly disconnected and thus not very informative. One could also (iii) perform edge-based selection (with local expansion) to ensure dense regions are sampled more frequently than sparse regions  \cite{rahman2014graft}; still other methods have been suggested  \cite{prvzulj2006efficient}.

Given a set of $k$ nodes, we wish to quickly ascertain which graphette is represented, and which automorphism orbits each of the $k$ nodes belong to. To do that we need a canonical list of graphettes and their orbits, and a fast way to determine which canonical graphette is represented by any permutation of $k$ nodes. Here we demonstrate how, if $k$ is fixed and relatively small ($k\le 8$ in our case), this can be accomplished in constant time by pre-computing and storing a lookup table indexed by a bit vector representation of the lower triangular matrix of the (undirected) adjacency matrix of the induced subgraph.  Given such an index, the value associated with that index identifies the canonical graphette (a canonical ordering of the nodes for that graphette).  We also pre-compute the automorphism orbits of all the canonical graphettes.  Thus, by reversing the lookup table we can, in constant time, infer the orbit identity of each of the $k$ nodes in that $k$-graphette. As a corrollary, we can also update the (statistically sampled) {\it graphette orbit degree vector} of each of the $k$ nodes, similar to the graphlet degree vector \cite{PrzuljGDD2007}.

We use the following abbreviations and notations throughout:

\noindent\begin{tabular}{r p{0.7\columnwidth}}
$G(V,E)$                    & The Graph with nodes $V$ and edges $E$\\
$\mathcal V(G)$             & The set of nodes of graph $G$\\
$\mathcal E(G, u, v)$          & The boolean value denoting connectivity between nodes $u$ and $v$ of graph $G$\\
$\Longleftrightarrow$, iff  & If and only if\\
$|S|$                       & The number of elements in set $S$.\\
$Adj(G)$                    & The adjacency matrix representation of graph $G$\\
$Aut(G)$                    & The set of automorphisms of graph $G$\\
$\mathcal K(g)$                & Canonical isomorph of graphette $g$\\
\end{tabular}
\begin{figure}[thb]
	\includegraphics[width=0.75\linewidth]{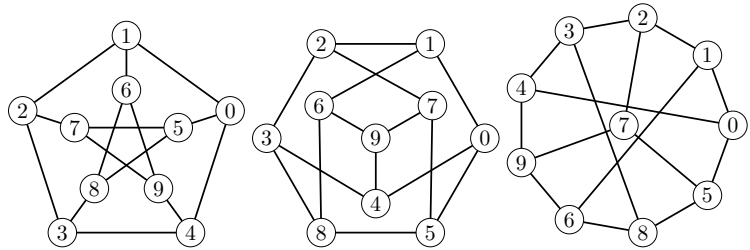}
	\caption{Three isomorphic representations of the Petersen graph.}
	\label{fig:Petersen}
\end{figure}
\subsection*{Canonization of graphettes}
If graphs $G$ and $H$ are isomorphic, it essentially means they are exactly the same graph, but drawn differently. For example, Fig \ref{fig:Petersen} shows three different drawings of the Petersen graph. Technically,
an isomorphism between networks $G$ and $H$ is a permutation $\pi:\mathcal V(G)\to\mathcal V(H)$ so that
\[\mathcal E(G,u,v)\Longleftrightarrow\mathcal E(H,\pi(u),\pi(v)),\]

\begin{figure}[tbh]
	\centering
%
	\includegraphics[width=0.75\linewidth]{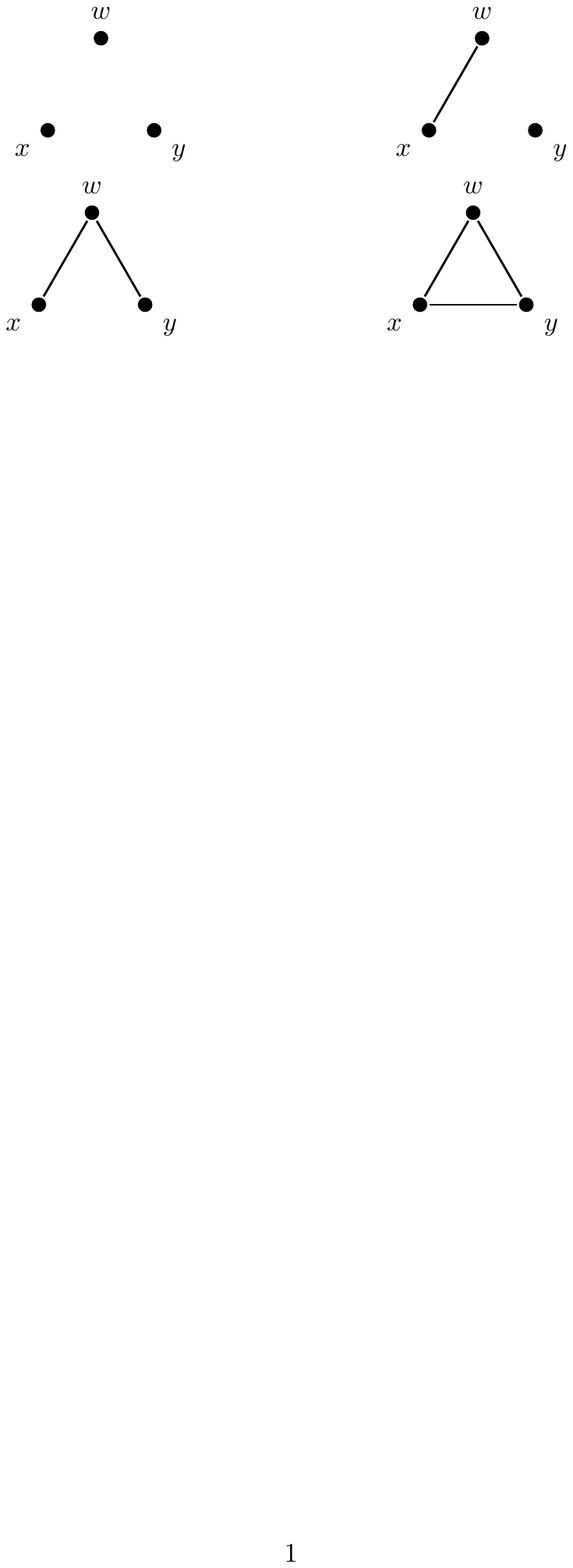}
	\captionsetup{width=0.6\textwidth}
	\caption{All the possible 3-graphettes.}
	\label{fig:3-graphettes}
\end{figure}

Consider a 3-graphette with nodes $w,x$ and $y$. There are only 4 possible such graphettes, depicted in Fig \ref{fig:3-graphettes}.
However, by permuting the order of the nodes, each of these graphettes can be represented by several isomorphic variants.  In order to determine if two graphettes are isomorphic, we will represent its (undirected) graph with the lower-triangle of its adjacency matrix.  We will place this lower-triangular matrix into a bit vector, resulting in a representation similar to existing ones for orbit identification  \cite{Melckenbeeck1EtAl2016}.

\begin{figure}
    \centering
    \includegraphics[width=0.8\linewidth, clip]{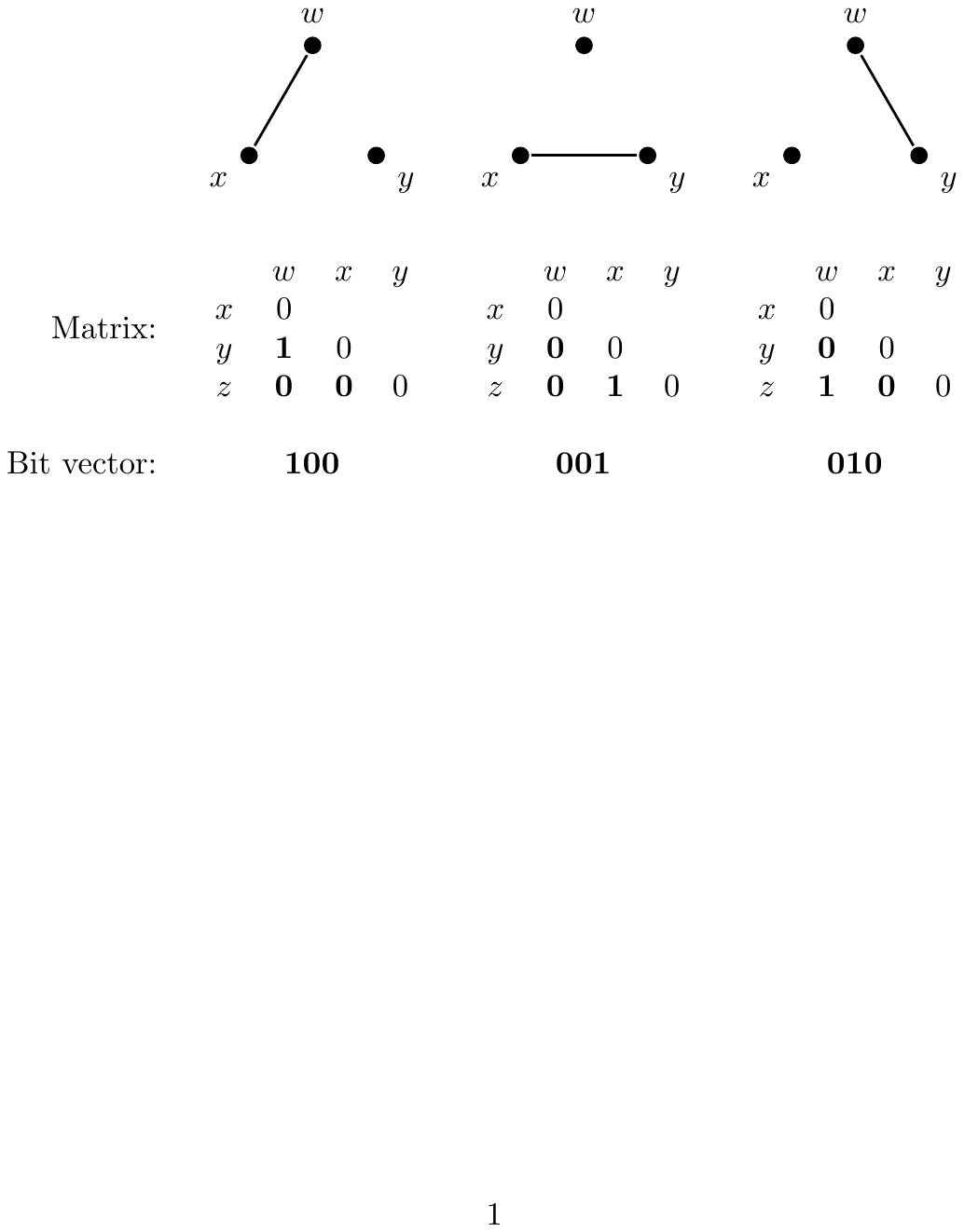}
    \caption{All 3-graphettes with exactly one edge; the {\it canonical} one is the one with lowest integer representation (the middle one in this case). Each of them is placed in a lookup table indexed by the bit vector representation of its adjacency matrix, pointing at the canonical one. In this way we can determine that it is the one-edge 3-graphette in constant time.}
    \label{fig:lookup-table}
\end{figure}

We now describe the idea of a {\it canonical representative} of each isomorph.
To provide an explicit example, consider Fig \ref{fig:lookup-table}, depicting the three isomorphic configurations of the 3-graphette that has exactly one edge.
In order to determine that these graphettes are all isomorphic, we take the bit vector representation depicted, and define the lowest-numbered bitvector among all the isomorphs as the {\it canonical} representative. All the other isomorphs in the lookup table point to it. 
In this way, every graph on 3 nodes can be efficiently mapped to its canonical 3-isomorph. 

\begin{table*}
\centering
\begin{tabular}{| l | l | l | l | l | l |}
\hline
$k$ & bits   & \#Graphs  & Space  & \#Canonicals & \#Orbits  \\
    & $b(k)$ & $2^{b(k)}$            & $b(k) 2^{b(k)}$     & $NC(k)$  & \\
\hline
1   & 0          & 1      & 0       & 1             & 1\\
\hline
2   & 1          & 2      & 0.25 B  & 2             & 2\\
\hline
3   & 3          & 8      & 3 B     & 4             & 6\\
\hline
4   & 6          & 64     & 48 B    & 11            & 20\\
\hline
5   & 10         & 1 K    & 1.25 KB & 34            & 90\\
\hline
6   & 15         & 32 K   & 60 KB   & 156           & 544\\
\hline
7   & 21         & 2 M    & 5.25 MB & 1044          & 5096\\
\hline
{\bf 8}   & {\bf 28}         & {\bf 256 M}  & {\bf 896 MB}  & {\bf 12346}         & {\bf 79264}\\
\hline
9   & 36         & 64 G   & 288 GB  & 274668        & 2208612\\
\hline
10  & 45         & 32 T   & 180 TB  & 12005168      & 113743760\\
\hline
11  & 55         &32 P    & 220 PB  & 1018997864    & 10926227136\\
\hline
12  & 66         &64 E    & 528 EB  & 165091172592  & 1956363435360\\
\hline
\end{tabular}
\captionsetup{width=1\textwidth}
\caption{For each value of $k$: the number of bits $b(k)=\frac{k(k-1)}{2}$ required to store the lower-triangle of the adjacency matrix for an undirected $k$-graphette; the number of such $k$-graphettes counting all isomorphs which is just $2^{b(k)}$; the number of canonical $k$-graphettes (this will be the number of unique entries in the above lookup table \cite{numcanon}, and up to $k=8$, 14 bits is sufficient); and the total number of unique automorphism orbits (up to $k=8$, 17 bits is sufficient)\cite{numorbit}. Note that up to $k=8$, together the lookup table for canonical graphettes and their canonical orbits fits into 31 bits, allowing storage as a single 4-byte integer, with 1 bit to store whether the graphette is connected (i.e., also a graphlet).
The suffixes K, M, G, T, P, and E represent exactly $2^{10}, 2^{20}, 2^{30}, 2^{40}, 2^{50}$ and $2^{60}$, respectively.}
\label{tab:k-graphette-canonical-orbit}
\end{table*}

We also automatically determine the number of {\it automorphism orbits} (see below) for each canonical isomorph. \autoref{tab:k-graphette-canonical-orbit} represents, for various values of $k$, the number of bits $b(k)$ required to store the lower-triangular matrix of all graphettes on $k$ nodes (i.e., the length of the bit vector used to store this matrix); the resulting total number possible representations of $k$ nodes (which is simply $2^{b(k)}$); the number of canonical isomorphs $NC(k)$; and the number of canonical automorphism orbits. Note that, to map each possible set of $k$ nodes to their canonical isomorphs, the lookup table has $2^{b(k)}$ entries, and each entry has a value between 0 and $NC(k)-1$. Note that for $k$ up to 8, the graphettes can be stored in 32 bits. In that case, the maximum space required will be $32\times 2^{28}= 1$ GB. This is as far as we go, for now.  Moore's Law suggests that we may be able to go to $k=9$ within a few years, and to $k=10$ in perhaps a decade or two.

We note that the most expensive part of our algorithm is creating the lookup table between an arbitrary set of $k$ nodes, to the canonical graphette represented by those $k$ nodes; in the absence of a requirement for this lookup table, one could use orbit counting equations  \cite{Melckenbeeck1EtAl2016} to generate automorphism orbits up to $k=12$.

\subsection*{Generating the lookup table from non-canonical to canonical graphettes}
Assume the large graph $G$ has $n$ nodes labeled 0 through $n-1$, and pick an arbitrary set of $k$ nodes $U=\{u_0,u_1,\ldots,u_{k-1}\}$. Create the subgraph $g$ induced on the nodes in $U\subseteq \mathcal V(G)$, and let its bit vector representation $B$ be of the form lower-triangular matrix described in Fig \ref{fig:lookup-table}. We now describe how to create the lookup table that maps any such $B$ to its canonical representative.

We iterate through all $2^{b(k)}$ bit vectors in order; for each value $B$, we check to see if it is isomorphic to any of the previously found canonical graphettes; if so, the lookup table value is set to the previously found canonical graphette; otherwise we have a new, previously unseen canonical graphette and the lookup table value is set to itself ($B$).

When checking for isomorphism between $B$ and all previously found canonical graphettes, we use a relatively simple brute force approach. If the degree distribution of the two graphettes are different, we can immediately discard the pair as non-isomorphic; otherwise we resort to cycling through every permutation of the nodes checking each pair for graph equality, which has worst-case running time of $k^2k!$.
The total run time to compute the lookup table for a particular value $k$ is thus bounded above by $k^2k!\cdot NC(k)\cdot 2^{b(k)}$, where $k!$ is the maximum number of permutations we need to check if a non-canonical matches an existing canonical, $k^2$ is the worst-case running time to check if 2 specific permutations of $k$-graphettes are isomorphic, there are at most $NC(k)$ canonicals to check against \cite{numcanon}, and $2^{b(k)}=2^{n(n-1)/2}$ is the total number of undirected graphs on $k$ nodes. More sophisticated approaches exist  \cite{NAUTY}, which may more easily allow higher values of $k$.

This process can also be parallelized, which is what we did for $k=8$. Essentially, we can split the $2^{b(k)}$ non-canonical graphettes into $m$ sets of about $2^{b(k)}/m$ graphettes each, and then spread the computation across $m$ machines.  For each of the $m$ sets $S_i$, we loop through all graphettes in that set and mark out which are isomorphic to each other.  For each set $S_i$, we will find a set $T_i$ of lowest-numbered ``temporary'' canonical graphettes in $S_i$, along with the map $TC: S_i \rightarrow T_i$ of which graphettes in $S_i$ map to each temporary canonical in $T_i$.  That is, for each graphette $g\in S_i$, $\exists h\in T_i$ for which the temporary canonical $TC(g)=h$.  Finally, once all the $m$ sets have been evaluated in this way, a second stage passes through all the $T_i, i=0,\ldots,m-1$, merging the temporary canonicals together into a final, global list of canonical graphettes, while also propagating these globally lowest-numbered canonicals back up through the $m$ temporary canonical maps, so each graphette $g$ globally maps to the globally lowest-numbered canonical; we call this process {\it sifting for canonicals}, and it may require several iterations to globally find the final list of canonicals.  In this way we ran $k=8$ in about a week across 600 cores, for a total of 600 CPU-weeks.  This process could probably be made more efficient with smarter isomorphism checking \cite{NAUTY,SAUCY3}.

\subsection*{Graph automorphism and orbits}
An isomorphism $\pi:\mathcal V(g)\to\mathcal V(g)$ (from a graph $g$ to itself) is called an \textit{automorphism}.

While an isomorphism is just a permutation of the nodes, it is called an \textit{automorphism} if it results in exactly the same labeling of the nodes in the same order---in other words exactly the same adjacency matrix. The set of all automorphisms of $g$ will be called $Aut(g)$.

An \textit{automorphism orbit}, or just {\it orbit}, of $g$ is a minimally sized collection of nodes from $\mathcal V(g)$ that remain invariant under \textbf{every} automorphism of $g$  \cite{automorph-orbit}. There can be more than one automorphism orbit, and each orbit can have anywhere from $1$ to $k$ member nodes; refer again to Fig \ref{fig:graphlets5} for some examples. More formally, a set of nodes $\omega$ constitute an orbit of $g$ iff:
\begin{enumerate}
	\item For any node $u\in \omega$ and \textbf{any} automorphism $\pi$ of $g$, $u\in \omega \Longleftrightarrow \pi(u)\in \omega$.
	\item if nodes $u,v\in \omega$, then there exists an automorphism $\pi$ of $g$ and a $\gamma>0$ so that $\pi^{\gamma}(u)=v$.
\end{enumerate}
Now, we shall prove a few relevant results that will be useful later for automatically enumerating the orbits.
\begin{proposition}
	For each node $u\in \mathcal V(g)$ and each automorphism $\pi:\mathcal V(g)\to\mathcal V(g)$, there exists an integer $\lambda>0$ such that $\pi^\lambda(u)=u$.
\end{proposition}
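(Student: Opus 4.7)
The plan is to exploit finiteness of $\mathcal V(g)$ together with the bijectivity of the automorphism $\pi$. Since $g$ is a $k$-graphette on a finite node set (in our setting $k\le 8$), the sequence of forward iterates $u,\pi(u),\pi^2(u),\ldots$ lies in the finite set $\mathcal V(g)$, so it cannot consist of pairwise distinct elements indefinitely.

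The argument I would give proceeds in three short steps. First, I would invoke the pigeonhole principle on the sequence $\{\pi^i(u)\}_{i=0}^{|\mathcal V(g)|}$: since there are $|\mathcal V(g)|+1$ terms lying in a set of size $|\mathcal V(g)|$, there must exist integers $0\le i < j \le |\mathcal V(g)|$ with $\pi^i(u)=\pi^j(u)$. Second, since $\pi$ is an automorphism it is a bijection $\mathcal V(g)\to\mathcal V(g)$, and hence so is $\pi^i$; applying its inverse to both sides yields $u=\pi^{j-i}(u)$. Third, I take $\lambda:=j-i>0$, which is the desired positive integer satisfying $\pi^\lambda(u)=u$.

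A slightly slicker alternative I would mention is to appeal directly to the fact that the symmetric group $S_{|\mathcal V(g)|}$ is finite, so $\pi$ itself has finite order $\lambda_0$ (dividing $|\mathcal V(g)|!$), and then $\pi^{\lambda_0}$ is the identity, giving $\pi^{\lambda_0}(u)=u$ uniformly for every $u$. Either formulation works, but the pigeonhole version is more elementary and keeps the dependence on $u$ explicit, which matches the wording of the statement.

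There is essentially no hard step here: the only subtlety is remembering to use bijectivity of $\pi$ when cancelling $\pi^i$ from both sides (an arbitrary function on a finite set need not return to its starting point). Once that is flagged, the proof is a one-line pigeonhole argument. I would keep the write-up to a couple of sentences and move on, since the content of the proposition is really just ``every permutation of a finite set has finite order.''
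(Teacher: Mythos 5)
Your proof is correct and follows essentially the same route as the paper's: both rest on finiteness of $\mathcal V(g)$ plus bijectivity of $\pi$. The only difference is that you spell out the pigeonhole step and the cancellation of $\pi^i$, whereas the paper stops at ``the conclusion obviously follows''; your version is a more explicit rendering of the same argument.
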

\begin{proof}
	Because $\pi$ is an automorphism,
	\begin{align*}
		u\in \mathcal V(g) \implies& \pi(u)\in \mathcal V(g)\\
		                   \implies& \pi^2(u)\in \mathcal V(g)\\
		                   \vdots&\\
		                   \implies& \pi^i(u)\in\mathcal V(g), \;\; \forall i\in \mathbf{N}.
	\end{align*}
	Since $|\mathcal V(g)|$ is finite and $\pi$ is bijective, the conclusion obviously follows.
\end{proof}

We shall call the set of nodes \[\mathcal C_{\pi}(u) = \{u, \pi(u),\ldots, \pi^{\lambda-1}(u)\}\] the \textit{cycle} of $u$ under automorphism $\pi$, where $\lambda$ is the smallest positive integer such that $\pi^\lambda(u) = u$. 

Note that $\lambda$ is not unique since $\pi^\lambda(u) = \pi^{2\lambda}(u)=\cdots=u$. Also, $\pi, u$, and $\lambda$ are tied together into triples such that knowing any two determines the third.

\begin{corollary}\label{corollary:cycle1}
$\pi$ maps every node $\in \mathcal C_\pi(u)$ to a node (possibly same) $\in \mathcal C_\pi(u)$.
\end{corollary}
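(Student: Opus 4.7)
The plan is to unwrap the definition of $\mathcal{C}_\pi(u)$ and verify closure under $\pi$ by a direct case split, using the defining property $\pi^\lambda(u)=u$ to handle the one potentially problematic index.

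First I would fix an arbitrary node $v\in\mathcal{C}_\pi(u)$. By the definition of the cycle, $v=\pi^i(u)$ for some integer $i$ with $0\le i\le \lambda-1$. Applying $\pi$ once more yields $\pi(v)=\pi^{i+1}(u)$, so the task reduces to showing that $\pi^{i+1}(u)\in\mathcal{C}_\pi(u)$ for every admissible $i$.

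Next I would split into two cases. If $0\le i\le \lambda-2$, then $1\le i+1\le \lambda-1$, and $\pi^{i+1}(u)$ is literally one of the listed elements $\pi(u),\ldots,\pi^{\lambda-1}(u)$ of $\mathcal{C}_\pi(u)$. If instead $i=\lambda-1$, then $\pi(v)=\pi^\lambda(u)$; by the definition of $\lambda$ (the smallest positive integer with $\pi^\lambda(u)=u$, whose existence is guaranteed by the preceding proposition), this equals $u$, which is the first listed element of $\mathcal{C}_\pi(u)$. In both cases $\pi(v)\in\mathcal{C}_\pi(u)$, which is exactly the claim.

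There really is no hard step here: the result is a routine consequence of the definition together with the proposition that justifies the existence of $\lambda$. The only place one has to be slightly careful is the wrap-around at index $i=\lambda-1$, where one must explicitly invoke $\pi^\lambda(u)=u$ rather than simply extending the listed indices; I would emphasize this case in the write-up to avoid any ambiguity about whether $\pi$ might carry the ``last'' element of the cycle out of $\mathcal{C}_\pi(u)$.
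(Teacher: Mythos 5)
Your proof is correct and is exactly the routine verification the paper leaves implicit (the corollary is stated there without proof). The case split at the wrap-around index $i=\lambda-1$, invoking $\pi^\lambda(u)=u$, is the only point of substance and you handle it properly.
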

\begin{corollary}
In any automorphism $\pi$ of $g$, every node appears in exactly one cycle.
\end{corollary}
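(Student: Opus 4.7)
The plan is to reduce the corollary to two claims: \emph{existence}---every node lies in at least one cycle---and \emph{uniqueness}---the cycles of any two nodes either coincide exactly or are disjoint. Existence is essentially free from the preceding proposition: for every $u \in \mathcal V(g)$, it guarantees a positive integer $\lambda$ with $\pi^\lambda(u) = u$, so $\mathcal C_\pi(u)$ is a well-defined cycle containing $u$.

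The heart of the argument is the key lemma: if $u \in \mathcal C_\pi(v)$, then $\mathcal C_\pi(u) = \mathcal C_\pi(v)$. First I would establish $\mathcal C_\pi(u) \subseteq \mathcal C_\pi(v)$ by invoking Corollary~\ref{corollary:cycle1}: since $u$ already lies in $\mathcal C_\pi(v)$, repeated application of $\pi$ never escapes $\mathcal C_\pi(v)$, so every iterate $\pi^j(u)$---and hence every element of $\mathcal C_\pi(u)$---lies in $\mathcal C_\pi(v)$. For the reverse inclusion, I would write $u = \pi^i(v)$ for some $0 \le i < \lambda_v$ (with $\lambda_v$ the period of $v$) and observe that $\pi^{\lambda_v - i}(u) = \pi^{\lambda_v}(v) = v$, so $v \in \mathcal C_\pi(u)$; a symmetric application of the first inclusion then yields $\mathcal C_\pi(v) \subseteq \mathcal C_\pi(u)$. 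With this lemma in hand, uniqueness is immediate: if $u \in \mathcal C_\pi(v) \cap \mathcal C_\pi(w)$, then $\mathcal C_\pi(v) = \mathcal C_\pi(u) = \mathcal C_\pi(w)$, so the two cycles are the same.

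The only mild obstacle I anticipate is index bookkeeping---checking that $\lambda_v - i$ falls in an admissible range when expressing $v$ as an iterate of $u$. This is painless: either $i = 0$ (so $u = v$ and there is nothing to show), or $0 < \lambda_v - i < \lambda_v$. As a cleaner alternative, one could instead define the relation $u \sim v \iff \exists\, k \ge 0 : \pi^k(u) = v$, verify it is an equivalence relation on $\mathcal V(g)$ (symmetry again coming from Proposition 1), and identify its classes with the cycles, so that the partition property is inherited from general set theory. I lean toward the direct approach because it stays closer to the definitions already introduced and reuses Corollary~\ref{corollary:cycle1} that was just proved.
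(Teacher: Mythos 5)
Your proof is correct. The paper actually states this corollary without any proof (it follows it only with the informal remark that ``the cycles $\pi$ creates are disjoint'' and a worked example), so you are supplying the standard argument the authors implicitly rely on: existence from Proposition 1, and disjointness via the lemma that $u \in \mathcal C_\pi(v)$ forces $\mathcal C_\pi(u) = \mathcal C_\pi(v)$ --- equivalently, that the cycles are the equivalence classes of the orbit relation of the cyclic group generated by $\pi$. Both of your suggested routes are sound and the index bookkeeping you flag is handled correctly.
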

In other words, the cycles $\pi$ creates are disjoint. (However, the cycles from different automorphisms might not be so.) Hence, it makes sense to say \textit{splitting an automorphism into its cycles}. For example consider the permutation $\pi = (201354)$ of (012345). Since $\pi(0)=2, \pi(2)=1, \pi(1)=0$, the nodes (012) form a cycle. Now start with the next node, 3. $\pi(3) = 3$. So, (3) is another cycle. Finally, $\pi(4) = 5, \pi(5)= 4$, so, (45) form another cycle. Hence, the permutation (201354) is split into three cycles, namely (012), (3), (45).

\begin{proposition}
    The orbits are disjoint. (In other words, each node appears in exactly one orbit.)
\end{proposition}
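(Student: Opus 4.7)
The claim has two parts hidden in the phrase ``each node appears in exactly one orbit'': existence (every node is in some orbit) and uniqueness (no node lies in two distinct orbits). I would treat them separately, but both will be derived directly from the two defining properties of an orbit stated just before the proposition.

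For uniqueness, the plan is a short proof by contradiction. Suppose $u$ belongs to two orbits $\omega_1$ and $\omega_2$. Pick an arbitrary $v\in\omega_2$. By property~2 applied in $\omega_2$, there exist an automorphism $\pi\in Aut(g)$ and an integer $\gamma>0$ with $\pi^\gamma(u)=v$. Now apply property~1 to $\omega_1$: since $u\in\omega_1$, $\pi(u)\in\omega_1$; inducting on $i$ (each step uses property~1 with the automorphism $\pi$ and the current element of $\omega_1$), I get $\pi^i(u)\in\omega_1$ for every $i\ge 0$, and in particular $v=\pi^\gamma(u)\in\omega_1$. Hence $\omega_2\subseteq\omega_1$, and by symmetry $\omega_1=\omega_2$, contradicting the assumption that they are distinct.

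For existence, given any $u\in\mathcal V(g)$ I would construct an orbit containing it, namely
\[
\omega_u=\{\,v\in\mathcal V(g)\;:\;\exists\,\pi\in Aut(g),\;\gamma>0 \text{ with } \pi^\gamma(u)=v\,\}.
\]
Note $u\in\omega_u$, using the identity automorphism (or any $\pi$ together with the $\lambda$ from Proposition~1, so that $\pi^\lambda(u)=u$). Property~2 for $\omega_u$ is almost by definition: if $v,w\in\omega_u$ then $\pi_1^{\gamma_1}(u)=v$ and $\pi_2^{\gamma_2}(u)=w$; composing, $\pi_2^{\gamma_2}\circ\pi_1^{-\gamma_1}$ (which is again an automorphism) sends $v$ to $w$, and a suitable positive power exists by Proposition~1. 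For property~1, I would take $v\in\omega_u$ with $\pi^\gamma(u)=v$ and an arbitrary $\sigma\in Aut(g)$, and observe that $\sigma(v)=\sigma\pi^\gamma(u)=(\sigma\pi^\gamma)(u)$, where $\sigma\pi^\gamma\in Aut(g)$; hence $\sigma(v)\in\omega_u$, and the converse direction follows by replacing $\sigma$ with $\sigma^{-1}$.

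The main obstacle I anticipate is a subtle one: property~2 only allows \emph{positive} exponents $\gamma$, while the natural constructions above use inverses and compositions freely. The fix is Proposition~1, which guarantees that for any $\pi$ and $u$ there is a $\lambda>0$ with $\pi^\lambda(u)=u$, so $\pi^{-1}$ acting on $u$ can always be replaced by a positive power $\pi^{\lambda-1}$; I would make this replacement explicit each time an inverse appears, so the constructed $\omega_u$ truly satisfies the ``$\gamma>0$'' clause of the definition. Once that bookkeeping is done, uniqueness plus existence give exactly-one membership, which is the statement to prove.
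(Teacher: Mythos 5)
Your uniqueness argument is essentially identical to the paper's proof: assume $u$ lies in two orbits, use property~2 to write an arbitrary $v$ in one orbit as $\pi^\gamma(u)$, push $u$ through $\pi$ repeatedly via property~1 to land $v$ in the other orbit, and conclude equality by symmetry. The only genuine difference is that you also prove the existence half (every node lies in \emph{some} orbit) by explicitly constructing $\omega_u$ and verifying the two defining properties; the paper's proof silently omits this, even though its parenthetical restatement ``exactly one orbit'' logically requires it. Your construction is correct, and your care with the ``$\gamma>0$'' clause (replacing $\pi^{-1}$ acting on a node by the positive power $\pi^{\lambda-1}$ via Proposition~1) is exactly the bookkeeping needed; the paper itself relies on the same trick later, in the correctness proof of Algorithm~1, when it forms $\pi_2^{\gamma_2}\circ\pi_1^{-\gamma_1}$. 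So your write-up is a strict superset of the paper's argument, and the extra half is worth keeping.
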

\begin{proof}
    Assume the contrary, i.e., a node $u\in \mathcal V(g)$ appears in two different orbits $\omega_1$ and $\omega_2$. According to the second condition, for any other node $v\in \omega_1$, there exists an automorphism $\pi$ of $g$ and a $\gamma$ so that $\pi^\gamma(u)=v$. However, from the first condition,
    \begin{align*}
    u\in \omega_2 \implies& \pi(u)\in\omega_2\\
    \implies& \pi^2(u)\in\omega_2\\
    \vdots&\\
    \implies& \pi^\gamma(u)\in\omega_2\\
    \implies& v\in \omega_2\\
    \end{align*}
    Therefore, every node $v\in\omega_1$ also belongs to $\omega_2$. Hence, $\omega_1\subseteq\omega_2$.
    
    Following the same logic, $\omega_2\subseteq\omega_1$, implying $\omega_1=\omega_2$.  $\Rightarrow\!\Leftarrow$
\end{proof}
\begin{corollary}
Each cycle appears in exactly one orbit, which completely contains that cycle.
\end{corollary}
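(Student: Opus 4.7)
The plan is to combine the two defining conditions of an orbit (stated just above the proposition on disjointness) with the previous corollary that $\pi$ maps $\mathcal{C}_\pi(u)$ into itself. Concretely, I would fix an automorphism $\pi$, a node $u$, and the cycle $\mathcal{C}_\pi(u) = \{u,\pi(u),\ldots,\pi^{\lambda-1}(u)\}$, and let $\omega$ be the (unique, by the previous proposition) orbit containing $u$. The goal becomes showing $\mathcal{C}_\pi(u) \subseteq \omega$; since orbits are disjoint, this automatically gives that $\mathcal{C}_\pi(u)$ is contained in exactly one orbit.

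The first step is to use condition (1) of the orbit definition, namely $v\in\omega \Longleftrightarrow \pi(v)\in\omega$ for every automorphism $\pi$. Starting from $u\in\omega$, a one-line induction on $i$ gives $\pi^i(u)\in\omega$ for every $i\ge 0$, and therefore every element of $\mathcal{C}_\pi(u)$ lies in $\omega$. This is essentially the same inductive chain used in the proof of the previous proposition, just run in the forward rather than contradictory direction.

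Next I would note that no other orbit $\omega'$ can also meet $\mathcal{C}_\pi(u)$: if some $\pi^j(u)\in\omega'$, then applying the biconditional in condition (1) repeatedly (in reverse) would force $u\in\omega'$, and disjointness of orbits would yield $\omega' = \omega$. This shows the orbit containing the cycle is unique, and that it \emph{completely} contains the cycle, which is the content of the corollary.

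I do not expect any real obstacle here; the statement is a direct packaging of the two orbit axioms together with the disjointness proposition and Corollary~\ref{corollary:cycle1}. The only care needed is to invoke condition (1) in both directions when ruling out a second orbit, and to appeal to the just-proved disjointness of orbits rather than re-deriving it.
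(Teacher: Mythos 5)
Your argument is correct and is essentially the same as the paper's: both rest on condition (1) of the orbit definition propagated around the cycle, plus the disjointness proposition. The only difference is presentational — you run the induction $u\in\omega \implies \pi^i(u)\in\omega$ forward directly, whereas the paper phrases it as a contradiction (an orbit that only partially contained a cycle would fail to be invariant under $\pi$).
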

\begin{proof}
	If an orbit $\omega$ partially contains a cycle $\mathcal C_\pi(u)$, then $\omega$ is not invariant under automorphism $\pi$, as $\pi$ will map some node in $\omega$ (and $\mathcal C_{\pi}(u)$) to another node outside $\omega$ (but still in $\mathcal C_{\pi}(u)$) according to corollary \ref{corollary:cycle1}, contradicting our definition of orbits. Since two orbits are disjoint, $\mathcal C_\pi(u)$ must appear only in $\omega$, and in none of the other orbits. 
\end{proof}
These statements are enough to be able to find all orbits of each graphette, as we now demonstrate.

\subsection*{Automatically enumerating all orbits of a graph}
From the propositions in the previous section, an algorithm to enumerate the orbits can be constructed like this:
\begin{enumerate}
    \item Generate all automorphisms of $g$.
    \item Split each automorphism into its cycles.
    \item Merge the cycles from different automorphisms to form orbits.
\end{enumerate}
\begin{algorithm*}
	\caption{Automatically enumerating automorphism orbits of a graph}
	\label{generateAutomorphs}
	\begin{algorithmic}
        \tt
        \Function{\sc generateAutomorphisms}{Graph $g$}
		    \State $Aut(g)=\{\}$ \; // Find the automorphisms of $g$
		    \For {each permutation $\pi$ of $\mathcal V(g)$}
		        \State apply $\pi$ over $Adj(g)$
		        \If{$Adj(g)==\pi(Adj(g))$} \textbf{put} $\pi$ in $Aut(g)$
		        \EndIf
		    \EndFor
        \EndFunction
		\\
		\Function{\sc generateCycles}{automorphism $\pi$}
    		\State $\mathcal C=\{\}$
    		\For{node $u$ in $\pi$}
    		    \If{$u$ is \textbf{not} visited}
    		        \State \textbf{mark} $u$ visited
    		        \State new cycle $\mathcal C_\pi(u) = \{\}$
    		        \State node $v$ = $\pi(u)$
    		        \While{$v$ != $u$}
    		            \State \textbf{put} $v$ in $\mathcal C_\pi(u)$
    		            \State \textbf{mark} $v$ visited
    		            \State $v = \pi(v)$
    		        \EndWhile
    		    \State \textbf{put} $\mathcal C_\pi(u)$ in $\mathcal C$
    		    \EndIf 
    		\EndFor
		\EndFunction
		\\
		\Function{\sc enumerateOrbits}{$\mathcal C(g)$}
		\For {each node $u\in\mathcal V(g)$} $\omega(u)=u$
		\EndFor
		\For{cycle $c\in \mathcal C(g)$}
			\State \textbf{let} $\omega_{\min}=\infty$
			\For{node $u\in c$} $\omega_{\min} = \min(\omega_{\min}, \omega(u))$
			\EndFor
			\For{node $u\in c$} $\omega(u) = \omega_{\min}$
			\EndFor
		\EndFor 
		\EndFunction
	\end{algorithmic}
\end{algorithm*}

\subsubsection*{Generating all automorphisms of $g$}
Referring to Algorithm 1, the function \textsc{generateAutomorphisms()} applies every possible permutation of $\mathcal V(g)$ over $Adj(g)$. Each permutation creates an isomorph of $Adj(g)$. If $Adj(g)$ is unchanged under some permutation $\pi$, then by definition, $\pi$ is an automorphism of $g$. Hence it is saved into $Aut(g)$. 

Two optimization strategies are employed:
\begin{enumerate}
	\item No node is mapped to another node with unequal degree.
	\item An automorphism of graph $g$ is also an automorphism of its complement graph $g'$. 
\end{enumerate}

In practice, this algorithm generates all automorphisms of all the canonical graphettes up to size 8 in a matter of seconds. Nevertheless, for additional speed up in higher sizes, modern sophisticated automorphism detection algorithms \cite{NAUTY, SAUCY3} may be used. 

\subsubsection*{Splitting automorphisms into cycles}
An automorphism $\pi$ of $g$ is basically a permutation of nodes of $g$. Hence, to split $\pi$ into cycles, we can repeatedly apply $\pi$ over every node $u\in\pi$ and remember the nodes $u$ transforms into. This forms the cycle with node $u$, i.e. $\mathcal C_\pi(u)$, which is saved in $\mathcal{C}$. After first visit, each node is marked visited to prevent more visits. 

\subsubsection*{Merging cycles to enumerate orbits}
Suppose $\mathcal C(g)$ is the set of all cycles resulting from all the automorphisms of $g$. 

To enumerate orbits from it, first each node $u$ is colored with a unique color $\omega(u)=u$. Then $\omega(u)$ is continuously updated to reflect the current color of $u$, as the nodes belonging to same orbits are gradually colored by identical color.

For the nodes of each cycle $c\in\mathcal C(g)$, we save their minimum color in $\omega_{\min}$, and then color all of them with $\omega_{\min}$. After coloring all the cycles in this way, nodes belonging to same orbits get the same color, and hence, get enumerated.

\subsection*{Proof of correctness of Algorithm  \ref{generateAutomorphs}}
Here we prove that Algorithm \ref{generateAutomorphs} determines every orbit of $g$.

Suppose a set $\omega$ is among the final sets generated by Algorithm \ref{generateAutomorphs}. We shall prove $\omega$ is an orbit of $g$ by showing that it follows the two properties of orbits:
\begin{enumerate}
	\item Let a node $u\in\omega$ form the cycle $\mathcal C_{\pi}(u)$ under automorphism $\pi$. The \textsc{generateCycles} function will apply $\pi$ repeatedly until it finds a $\lambda$ so that $\pi^\lambda(u)=u$ and will therefore determine $\mathcal{C}_{\pi}(u)$. Since the \textsc{enumerateOrbits} function assigned $u$ to $\omega$, it had also assigned all nodes in $\mathcal C_{\pi}(u)$ to $\omega$. Hence $u\in\omega\Longleftrightarrow \pi(u)\in\omega$.
	
	\item Suppose nodes $u,v\in\omega$. Then, either they belonged to a cycle from which they were assigned to a mutual set $\omega$ in \textsc{enumerateOrbits} function, or there is a third node $w$ so that $w$ shares separate cycles with $u$ and $v$ under different automorphisms $\pi_1$ and $\pi_2$. In the first case, $u$ and $v$ already belong to a common cycle. In the second case, assume $\pi_1^{\gamma_1}(w)=u$ and $\pi_2^{\gamma_2}(w)=v$. Consider the permutation $\phi=\pi_2^{\gamma_2}\circ\pi_1^{-\gamma_1}$. Since composition of two automorphisms is an automorphism \cite{automorph-compose}, $\phi$ is also an automorphism. And notice that \[\phi(u) = \pi_2^{\gamma_2}\left(\pi_1^{-\gamma_1}(u)\right)= \pi_2^{\gamma_2}(w)=v\]
	implying $u$ and $v$ belong to a common cycle under $\phi$.
\end{enumerate}
Therefore, $\omega$ is indeed an orbit of $g$. Since each node was given a unique orbit color in the beginning of {\sc enumerateOrbits}, every orbit of $g$ will be eventually found by Algorithm \ref{generateAutomorphs}.

\section*{Results and discussion}
Using the algorithms described herein, we have enumerated all possible graphlets, including the generalization of disconnected counterparts called {\it graphettes}, up to size $k=8$. The code and data can be found in \url{http://github.com/Neehan/Faye}. (Note that the github code uses the {\em upper} triangle matrix, though we intend to convert it to use the lower tringle as that representation has already been established \cite{Melckenbeeck1EtAl2016}.) We have also enumerated all orbits up to size $k=8$. More importantly to the statistical sampling technique described in the Introduction, we have used a bit-vector representation of all possible adjacency matrices of all possible sets of up to $k=8$ nodes and created a lookup table from the $2^{k(k-1)/2}$ $k$-sets to their canonical graphette representatives.  This allows us to determine, in constant time, the graphette represented by these $k$ nodes, as well as the automorphism orbits of each nodes. This allows efficient estimation of both the global distribution of graphlets and orbits, as well as an estimation of the graphlet (or orbit) degree vector for each node in a large graph $G$.

Although the lookup tables for $k>8$ are at present too big to compute or store, we could also use NAUTY or SAUCY to enumerate all the canonical graphettes up to size $k=12$, and use our orbit generation code Algorithm 1 to determine all the orbits in all graphettes up to size $k=12$.
We have verified that previous results are consistent with ours in terms of the number of distinct graphettes  \cite{numcanon} and orbits  \cite{numorbit} determined, as displayed in Table \ref{tab:k-graphette-canonical-orbit}.

In future work we will study which statistical sampling techniques most efficiently produce a good estimate of the complete graphlet and local (per-node) degree vectors. We also intend to study how this method may aid in cataloging of graphlets for database network queries, or in non-alignment network comparison  \cite{Przulj2014HiddenLanguage}.
Finally, there may be ways to combine our method with those of orbit counting equations  \cite{ORCA,Melckenbeeck1EtAl2016} to more efficiently produce samples of orbit counts.



\section*{Acknowledgments}
We thank Sridevi Maharaj, Dillon Kanne, and the anonymous referees for several helpful suggestions on presentation.

\nolinenumbers


%
%

\end{document}